\newtheorem{theorem}{Theorem}
\newtheorem{proposition}{Proposition}
\newtheorem{corollary}{Corollary}
\newtheorem{fact}{Fact}
\newcommand{\fa}{\mbox{\scriptsize \sf fa}}
\renewcommand{\root}{\mbox{root}}
\begin{document}
%%%%%%%%%%%%%%%%%%%%%%%%%%%%%%%%%%

\title{Source-Oblivious Broadcast\thanks{A preliminary version of this work appeared in the proceedings of the 18th Annual Conference on Theory and Applications of Models of Computation (TAMC), Hong Kong, China, May 13-15, 2024.}}

\date{}

\author{Pierre Fraigniaud\thanks{Additional support from ANR project DUCAT (ref. ANR-20-CE48-0006).}\\~\\
{\small Inst. de Recherche en Informatique Fondamentale}\\
{\small CNRS and Université Paris Cité}\\
{\small Paris, France}
\and
Hovhannes A. Harutyunyan\\~\\
{\small Dep. of Comp. Science and Soft. Engineering}\\
{\small Concordia University}\\
{\small Montréal, Canada}
}

\maketitle            

\begin{abstract}
This paper revisits the study of (minimum) broadcast graphs, i.e., graphs enabling fast information dissemination from every source node to all the other nodes (and having minimum number of edges for this property). This study is performed in  the framework of compact distributed data structures, that is, when the broadcast protocols are bounded to be encoded at each node as an ordered list of neighbors specifying, upon reception of a message, in which order this message must be passed to these neighbors. We show that this constraint does not limit the power of broadcast protocols, as far as the design of (minimum) broadcast graphs is concerned. 
Specifically, we show that, for every~$n$, there are $n$-node graphs for which it is possible to design protocols encoded by lists yet enabling broadcast in $\lceil\log_2n\rceil$ rounds from every source, which is optimal even for general (i.e., non space-constrained) broadcast protocols.   Moreover, we show that, for every~$n$, there exist such graphs with the additional property that they are asymptotically as sparse as the sparsest graphs for which $\lceil\log_2n\rceil$-round broadcast protocols exist, up to a constant multiplicative factor. Concretely, these graphs have $O(n\cdot L(n))$ edges, where $L(n)$ is the number of leading~1s in the binary representation of $n-1$, and general minimum broadcast graphs are known to have $\Omega(n\cdot L(n))$ edges.

\medbreak
\noindent\textbf{Keywords:} Broadcast graphs, Minimum broadcast graphs, Network design, Information dissemination, Distributed data structures.
\end{abstract}

\newpage

%%%%%%%%%%%%%%%%%%%%%%%%%%%%%%%%%%%%%%%%%%%%%%%%%%%%%%
\section{Introduction} 
%%%%%%%%%%%%%%%%%%%%%%%%%%%%%%%%%%%%%%%%%%%%%%%%%%%%%%

%-----------------------------------------------------
\subsection{Context}
%-----------------------------------------------------

The \emph{broadcast} problem is the information dissemination problem consisting of passing a piece of information (i.e., an atomic message) from a node of a connected graph to all the nodes of the graph. A broadcast protocol proceeds in synchronous \emph{rounds}. Initially, a single node is informed, called the \emph{source} node.  At each round, every informed node (i.e., every node possessing the information) can transmit the information to at most one of its neighbors. It follows that the number of informed nodes can at most double at each round, and thus, for every source node~$s$, the minimum number of rounds required to broadcast from~$s$ in an $n$-node graph~$G$, denoted by $b(G,s)$, is at least $\lceil\log_2n\rceil$. 

\subsubsection{Broadcast Graphs}

For every $n$-node connected graph $G$, let 
\[
b(G)=\max_{s\in V(G)}b(G,s)
\] 
be the \emph{broadcast time} of~$G$. Since $b(G,s)\geq \lceil\log_2n\rceil$ for every source node~$s$, we have $b(G)\geq \lceil\log_2n\rceil$. For every $n\geq 1$, any  graphs~$G$ with $n$ nodes and satisfying  $b(G)=\lceil\log_2n\rceil$ is called \emph{broadcast graph}~\cite{FHMP79}. 
Observe that broadcast graphs do exist. In particular, for every $n\geq 1$, the complete graph $K_n$ is a broadcast graph. Indeed, in the complete graph, it is always possible to construct a matching from the set of informed nodes to the set of non-informed nodes, which saturates one of the two sets, and therefore $b(K_n)=\lceil\log_2n\rceil$. 

\subsubsection{Minimum Broadcast Graphs}

Motivated by the design of networks supporting efficient communication protocols but consuming few resources, the construction of sparse broadcast graphs has attracted lot of attention (cf. Section~\ref{subsec:relatedwork}). For every $n\geq 1$, let $B(n)$ denotes the minimum number of edges of broadcast graphs with $n$ nodes. As all complete graphs are broadcast graphs, $B(n)$ is well defined for all~$n$. However, there are broadcast graphs much sparser than  complete graphs. For instance, for every $d\geq 0$, the $d$-dimensional hypercube~$Q_d$, with $n=2^d$ nodes, is a broadcast graph as $b(Q_d)=d= \lceil\log_2n\rceil$ (an optimal broadcast protocol merely consists to transmit the information sequentially through edges of increasing dimension). As a consequence, for $n$ is a power of~2, we have $B(n)\leq \frac12 n\log_2n$.  In fact, it is known \cite{GrigniP91} that 
\begin{equation}\label{eq:B}
B(n)=\Theta(n \cdot L(n)),
\end{equation}
 where, for every positive integer~$x$, $L(x)$ denotes the number of consecutive leading~1s in the binary representation of $x-1$.  For instance, $L(12)=1$ as $11=(1011)_2$. In particular, if
$n=2^d$ for $d\geq 1$, then $B(n)=\Theta(n\log n)$ as $n-1=(11\dots11)_2$, and thus $L(n)=d=\log_2n$. In fact, it is easy to see (see~\cite{FHMP79}) that all hypercubes~$Q_d$, $d\geq 1$, are \emph{minimum} broadcast graphs, that is, broadcast graphs with the smallest number of edges --- this is simply because, for $n=2^d$, every source node must be active at each round~$r=1,\dots,d$ for insuring that all nodes receive the information after $d$ rounds, and hence every (source) node must have degree~$d$. Therefore, for $n$ power of~2, we have $B(n)=\frac12 n\log_2n$.

%-------------------------------------------------------------
\subsection{Objective}
%-------------------------------------------------------------

We are interested in the \emph{encoding} of broadcast protocols at each node of a graph. For any source node~$s$ of a graph~$G$, every node $v$ of $G$ receiving a piece of information originated from~$s$ must inform its neighbors (non necessarily all) in a right order for insuring that the information is broadcast fast, ideally in $\lceil\log_2n\rceil$ rounds whenever $G$ is a broadcast graph. If the local encoding of the protocol is done in a brute force manner, every node~$v$ stores a table $T_v$ with $n$ entries, one for each source~$s$, such that 
\[
T_v[s]=(u_{s,1},\dots,u_{s,k_s})
\]
provides $v$ with an ordered list of neighbors that $v$ must sequentially inform upon reception of a piece of information broadcast from~$s$. That is, upon receiving a message broadcast from~$s$, node~$v$ forwards that message to $u_{s,1}$ first, then to $u_{s,2}$, and so on, up to $u_{s,k_s}$, in $k_s$ successive rounds. In the worst case, this encoding may consume up to $O(n \log d!)$ bits to be stored at a degree-$d$ node~$v$, which can be almost as high as storing the entire graph~$G$ at~$v$ whenever $d=\Theta(n)$. 

\subsubsection{Source-Oblivious Broadcast}

With the objective of limiting the space complexity of encoding broadcast protocols locally at each node, we consider \emph{source-oblivious} broadcast protocols, as previously considered in, e.g.,~\cite{DiksP96,RosenthalS87,SlaterCH81}. Any such protocol can be encoded at each node~$v$ by a unique ordered list 
\begin{equation}\label{eq:thelists}
\ell_v=(u_1,\dots,u_k)
\end{equation}
of $k$ distinct neighbors of~$v$, where $k\leq \deg(v)$, hence consuming only $O(d\log d)$ bits at degree-$d$ nodes.  That is, upon receiving a message, node~$v$ forwards that message to $u_1$ first, then to $u_2$, and so on, up to $u_k$, in $k$ successive rounds, no matter the source of the information is.

\subsubsection{Fully-Adaptive Source-Oblivious Broadcast}

We actually focus on the variant of source-oblivious broadcast introduced in~\cite{GholamiH22a,GholamiH22b}, called \emph{fully-adaptive}. Specifically, we assume that, upon reception of a piece of information broadcast from a source node~$s$, every node~$v$ acknowledges reception by sending a signal message to all its neighbors. Note that the signal messages are short in comparison to the broadcast messages, which could be arbitrarily large. It follows that, at the end of each round, every node~$v$ is aware of which of its neighbors have received the information, and this holds even if the node~$v$ has not yet received that information. 

Broadcast thus performs as follows in the fully-adaptive source-oblivious model. Upon receiving a piece of information originated from any source~$s$, every node~$v$ initiates a series of \emph{calls} to its neighbors during subsequent rounds. Let $\ell_v=(u_1,\dots,u_k)$ be the list of node~$v$, and assume that $v$ received the broadcast message at round~$r$. Then, for $i=1,\dots,k$, node~$v$ aims at forwarding the message to node~$u_i$ at round $r+i$. However, if node~$u_i$ has already received the information at the round when $v$ is supposed to send the message to~$u_i$, then $u_i$ is skipped, and the message is transmitted to $u_{i+1}$ instead, unless $u_{i+1}$  is also already informed, in which case~$u_{i+2}$ is considered, etc. More generally, at a given round, node~$v$ forwards the broadcast message to the next node in its list~$\ell_v$ that has not already received that message. It stops when the list is exhausted. 

%--------------------------------------------------------
\subsection{Our Results}
%--------------------------------------------------------

In a nutshell, we show that constraining broadcast protocols by bounding them to be source-oblivious does not limit their power, as far as the design of broadcast graphs and minimum broadcast graphs is concerned. Specifically, we first establish the following. 

\begin{theorem}\label{theo:BG}
In the fully-adaptive source-oblivious model,  there are $n$-node broadcast graphs for every $n\geq 1$, i.e., $n$-node graphs for which there exists a collection of lists $(\ell_v)_{v\in V(G)}$ achieving broadcast  in $\lceil\log_2n\rceil$ rounds from any source node. In particular, for every ${n\geq 1}$, the broadcast time of the clique~$K_n$ is $\lceil\log_2n\rceil$ in the fully-adaptive source-oblivious model. 
\end{theorem}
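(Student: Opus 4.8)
The plan is to prove the second statement — that $b(K_n) = \lceil \log_2 n \rceil$ in the fully-adaptive source-oblivious model — since the general claim about $n$-node broadcast graphs follows immediately once we have it ($K_n$ is an $n$-node broadcast graph in the model). Let me think about how to design the lists $\ell_v$ for the clique.

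The key idea: label the vertices of $K_n$ by $0, 1, \dots, n-1$. For each vertex $v$, I want to choose an ordered list $\ell_v$ of the other $n-1$ vertices such that, starting from any source, the fully-adaptive protocol floods everybody in $\lceil\log_2 n\rceil$ rounds. The natural candidate is to mimic the hypercube/binomial-tree broadcast: the canonical $\log_2 n$-round scheme on $2^d$ vertices has vertex $x$ (written in binary) send, at round $i$, to the vertex obtained from $x$ by flipping the $i$-th lowest $0$-bit that lies above the highest set bit — equivalently, each informed set is always a "prefix-closed" block and doubles each round. For general $n$ (not a power of two), the standard binomial-tree broadcast from a *fixed* source $s$ uses the rule: at round $i$, every currently informed node $v$ calls the node $v + 2^{i-1}$ if that index is $< n$ (indices taken relative to $s$). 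The catch is that this rule is source-dependent, so the list $\ell_v$ would depend on $s$ — exactly what we are not allowed.

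The real work, then, is to find a *single* ordering per vertex that simultaneously realizes a doubling schedule for every source, exploiting that the fully-adaptive model lets a node skip already-informed neighbors. My first attempt would be: set $\ell_v = (v+1, v+2, v+4, v+8, \dots, v+2^{j}, \dots)$ followed by the remaining vertices in some fixed order, with all indices taken modulo $n$. One then argues by induction on rounds that, from any source, after round $r$ the set of informed vertices is exactly an interval $\{s, s+1, \dots, s + 2^r - 1\}$ (mod $n$, truncated at size $n$): when vertex $s$ calls, the skipping behavior makes it chase down the first uninformed vertex, which is $s+2^{r}$-ish, while the already-informed vertices in the front half of the interval do the analogous doubling inside. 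I would need to check carefully that the skipping in the fully-adaptive model makes every node in the current interval "advance" to populate the next block, and that no collisions waste a round — this interval-invariant is the crux. If the plain powers-of-two list does not quite give a clean interval invariant, the fallback is to order $\ell_v$ so that $v$ first offers $v+1$, then $v+2$, then $v+3$, \dots (consecutive), and show the fully-adaptive skipping still yields doubling because all but one call per round hits an already-informed node and is skipped forward.

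The main obstacle I anticipate is precisely making the interval (or more generally "the informed set is determined and of size $\min(2^r, n)$") invariant airtight, including the wrap-around modulo $n$ and the non-power-of-two truncation in the last round. Once that invariant is established, the theorem is immediate: after $\lceil\log_2 n\rceil$ rounds the informed set has size $n$, and since $b(K_n,s)\ge\lceil\log_2 n\rceil$ always, equality follows for every source, hence $b(K_n)=\lceil\log_2 n\rceil$; the first sentence of the theorem then holds with $G = K_n$.
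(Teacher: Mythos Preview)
Your plan has a concrete gap: neither of the two list families you propose achieves the doubling invariant, because in both cases two informed vertices target the \emph{same} uninformed vertex in the same round. Take $n=8$, source $s=0$, and the ``powers of two'' list $\ell_v=(v+1,v+2,v+4,\dots)$. After round~1 the informed set is $\{0,1\}$. At round~2, vertex~$0$ moves to the next entry of its list, namely~$2$, which is uninformed, so $0\to 2$; meanwhile vertex~$1$ starts its list at $1+1=2$, which is uninformed, so $1\to 2$ as well. The informed set becomes $\{0,1,2\}$, not $\{0,1,2,3\}$. Exactly the same collision occurs with the consecutive list $\ell_v=(v+1,v+2,v+3,\dots)$. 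The fully-adaptive skipping rule only lets a node skip vertices that were informed \emph{before} the current round, so it cannot resolve simultaneous targeting, and one call is wasted. You identified this as the ``main obstacle'', but it is not merely a checking detail: it actually breaks both candidate schemes, and there is no evident purely additive (mod~$n$) list that avoids it.

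The paper sidesteps this by using XOR rather than addition. For $n=2^m$ it just uses the hypercube~$Q_m$ with the dimension list $\ell_v=(v\oplus 2^0,\,v\oplus 2^1,\dots)$, where collisions are impossible since distinct informed vertices flip the same bit to reach distinct targets. For general~$n$ it writes $n=2^{m-1}+2^{m-d_2}+\dots+2^{m-d_k}$, takes the disjoint union of hypercubes $Q_{m-1},Q_{m-d_2},\dots,Q_{m-d_k}$ as a subgraph of~$Q_m$, and assigns to each vertex of $Q_{m-d_i}$ the cyclically shifted dimension list $(d_i{+}1,\dots,m,1,\dots,d_i)$. Broadcast from any source first saturates its own sub-hypercube and then propagates to the others along the cross-dimensions. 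The $K_n$ statement is then a corollary (the constructed graph is a subgraph of~$K_n$), and as a bonus the construction gives a broadcast graph with only $O(n\log n)$ edges and maximum degree $O(\log n)$, which your direct-$K_n$ approach would not.
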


Note that this result contrasts with the current knowledge about weaker variants of the source-oblivious model, like those defined in~\cite{DiksP96}. In particular, in the \emph{adaptive} variant (where nodes are not aware whether their neighbors received the broadcast message, apart from the neighbors from which they actually received the message), and in the \emph{non-adaptive} variant (where the nodes forward the message blindly, by following the orders specified by their lists, and ignoring the fact that there is no need to send the message to neighbors from which they actually received that message), the best known upper bound on the minimum number of rounds required to broadcast in the complete graph $K_n$ is $\log_2n + O(\log\log n)$~\cite{KimC05}. 

\medskip

Next, we focus on the construction of minimum broadcast graphs, and establish the following. 

\begin{theorem}\label{theo:MBG}
In the fully-adaptive source-oblivious model, for every $n\geq 1$, there are $n$-node broadcast graphs with $O(n \cdot L(n))$ edges. 
\end{theorem}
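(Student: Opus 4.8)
The plan is to realize, for every $n$, a concrete $n$-node graph $G_n$ together with a family of lists $(\ell_v)_{v\in V(G_n)}$, and to establish correctness of fully-adaptive source-oblivious broadcast by induction on the recursive structure of $G_n$. The construction follows the same skeleton as the known sparse broadcast graphs witnessing the upper bound $B(n)=O(n\cdot L(n))$ in~\eqref{eq:B}~\cite{GrigniP91}: one combines a logarithmic number of ``blocks'', of geometrically varying sizes dictated by the binary representation of $n-1$, by joining consecutive blocks through matchings that saturate the smaller of the two sides. The novelty is to equip this skeleton with lists that work obliviously in the fully-adaptive model, and to argue that the resulting broadcast is still completed in $\lceil\log_2 n\rceil$ rounds from every source.

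I would start from the base case of a hypercube block $Q_d$, giving each vertex $v$ the list $\ell_v=(v\oplus e_1,\dots,v\oplus e_d)$ of its neighbors in order of increasing dimension. One checks, just as for the classical dimension-by-dimension broadcast, that from any source the informed set doubles every round: a vertex informed at round $r$ has all its ``low-dimension'' neighbors already informed, so the fully-adaptive skip rule makes it resume flipping exactly the higher dimensions, and thus the fixed list reproduces the source-dependent optimal schedule. This already handles $n=2^d$ with $\frac12 n\log_2 n=\frac12 n\,L(n)$ edges; a clique block of size $s$, which broadcasts in $\lceil\log_2 s\rceil$ rounds obliviously by Theorem~\ref{theo:BG}, can serve as an alternative base case for small blocks.

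For general $n$ the heart of the argument is a composition lemma: if $G_1$ and $G_2$ admit lists achieving fully-adaptive broadcast in $t_1$ and $t_2$ rounds from every source, and $M$ is a matching saturating $V(G_2)$, then $G=G_1\cup_M G_2$, with each matched vertex having its $M$-partner inserted at a prescribed position before its old list, achieves broadcast in $\max(t_1,t_2)+1$ rounds from every source. When the source lies in the saturated side $G_2$ this is immediate --- round~$1$ crosses $M$ and the two halves then run in parallel, the ``spurious'' calls triggered along $M$-edges by already-informed $G_2$-vertices being harmless, indeed helpful, precisely because of the skip rule. The main obstacle is a source lying in $G_1$ that is ill-placed with respect to $M$ (or unmatched, when $M$ is not perfect): since obliviousness forbids biasing the $G_1$-broadcast toward $M$, one must instead guarantee that enough matched vertices of $G_1$ are informed early enough for the induced broadcast inside $G_2$ to finish within $t_2$ extra rounds. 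I would control this by aligning $M$ with the block structure, so that the matched part of $G_1$ is itself a sub-block whose vertices are all informed by a fixed early round of the doubling process $G_1$ runs from any source; bounding that round is the quantitative crux, and it is exactly where the geometric choice of block sizes (making $t_1-t_2$ large enough) is used.

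Finally, I would account for the edges. Each block of size $s$ contributes $O(s\log s)$ edges (or $O(s^2)$ for a clique block, used only when $s=O(L(n))$), each matching contributes at most $\min(|V(G_1)|,|V(G_2)|)$ edges, and summing over the $O(\log n)$ levels of the recursion telescopes to $O(n\cdot L(n))$, matching up to a constant factor the lower bound $\Omega(n\cdot L(n))$ recorded in~\eqref{eq:B}. Combining the composition lemma with the base case by induction then yields an $n$-node graph $G_n$ with $O(n\cdot L(n))$ edges that is a broadcast graph in the fully-adaptive source-oblivious model, which is the statement.
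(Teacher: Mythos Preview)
Your plan diverges from the paper's construction and, as written, has two real gaps.

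\textbf{Edge count.} The paper does not use hypercube blocks. It takes $m-k=L(n)+1$ binomial \emph{trees} $T_{m-1},\dots,T_k$ (so only $n-(m-k)$ tree edges in total) and then joins every vertex to each of the $m-k$ roots $v_{m-1},\dots,v_k$, which contributes the remaining $(m-k)(n-1)=O(nL(n))$ edges. In your scheme each block of size~$s$ already costs $\Theta(s\log s)$ edges; since the block sizes are ``geometrically varying'' and sum to~$n$, the largest block has size $\Theta(n)$ and by itself carries $\Theta(n\log n)$ edges. That is not $O(nL(n))$ when $L(n)=o(\log n)$ --- e.g.\ $L(n)=1$ for all $2^k<n\le 2^k+2^{k-1}$ --- so the claimed telescoping to $O(nL(n))$ cannot hold with hypercube base blocks.

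\textbf{Composition lemma.} The ``immediate'' case is not immediate. If matched vertices put their $M$-partner before their old list, then every newly informed $v\in G_2$ spends its first active round on $M(v)\in G_1$ whenever $M(v)$ is still uninformed --- and early in the process almost all of $G_1$ is uninformed, since only $M(s)$ has been reached. This delays the internal $G_2$-broadcast by one round at \emph{each} vertex along a broadcast path, not by one round globally; the skip rule only kicks in once $G_1$ has caught up. Achieving $\max(t_1,t_2)+1$ (a bound that forces the informed set to exactly double at every round) therefore requires the same careful alignment of $M$ that you acknowledge for the $G_1$-source case, and the lemma is then no longer a black-box statement about arbitrary $G_1,G_2,M$.

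The paper's mechanism avoids both difficulties by a different device: a vertex~$u$ that is the root of a binomial subtree of dimension $m-p$ carries the list $\ell_u=v_{m-1},\dots,v_{m-p},c_{m-p-1}(u),\dots,c_0(u)$, so from any source the first rounds are spent informing the global roots (the skip rule absorbing those already informed), after which each root $v_{m-q}$ cascades down its own tree in the standard binomial schedule. The price is that the roots have degree $n-1$, exactly the high-degree feature your matching-based approach was trying to avoid.
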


It follows from this result combined with from Eq.~\eqref{eq:B} that fully-adaptive source-oblivious broadcast protocols enable the design of broadcast graphs as sparse as what can be achieved with general broadcast protocols (i.e., protocols taking into account the source of the broadcast), while drastically reducing the space complexity of the broadcast table to be stored at each node, from $O(n\log d!)$ to $O(d\log d)$ bits at degree-$d$ nodes in $n$-node graphs. 

\paragraph{Remark.}

The graphs whose existence is guaranteed by Theorem~\ref{theo:MBG} are broadcast graphs, and therefore Theorem~\ref{theo:BG} can be viewed as a corollary of Theorem~\ref{theo:MBG}. However, the graphs exhibited in the proof of Theorem~\ref{theo:BG} have maximum degree $O(\log n)$, whereas the graphs used in the proof of Theorem~\ref{theo:MBG} have maximum degree as large as $\Omega(n)$, which may be an issue from a practical perspective, as far as network design is concerned. On the other hand, the graphs in the proof of Theorem~\ref{theo:BG} have $O(n\log n)$ edges, while the graphs in the proof of Theorem~\ref{theo:MBG} are sparser, and actually have the smallest possible number of edges. In particular, for every $k\geq 0$, and every $n$ satisfying $2^k< n\leq 2^k+2^{k-1}$, the $n$-node broadcast graphs in Theorem~\ref{theo:MBG}  have a linear number of edges, as $L(n)=1$ for $n$ in this range.  

%---------------------------------------------------
\subsection{Related Work}
\label{subsec:relatedwork}
%---------------------------------------------------

For more about the broadcast problem in general, we refer to the many surveys on the topic, such as~\cite{FraigniaudL94,HedetniemiHL88,Hromkovic05}. We provide below a quick survey of the literature dealing with broadcast under \emph{universal lists}, i.e., source-oblivious broadcast.

The broadcast problem under universal lists was first discussed indirectly by Slater et al. \cite{SlaterCH81}. 
% They proved that, for any tree $T$, the broadcast time of~$T$ under the adaptive variant is equal to the broadcast time of~$T$. 
The first formal definition of the problem of broadcasting with universal lists was given by Rosenthal and Scheuermann~\cite{RosenthalS87}, who described an algorithm for constructing optimal broadcast schemes for trees under the adaptive model. Later, Diks and Pelc~\cite{DiksP96} distinguished between non-adaptive and adaptive models with universal lists, and formally defined them. They designed optimal broadcast schemes for paths, cycles, and grids under both models. They also gave tight upper bounds for tori and  complete graphs, for adaptive and non-adaptive models. Diks and Pelc also described an infinite family of graphs for which the adaptive broadcast time is strictly larger than the broadcast time. Later, Kim and Chwa~\cite{KimC05} designed non-adaptive broadcast schemes for paths and grids. They also came up with upper bounds for hypercubes, and improved the upper bound from \cite{DiksP96} under non-adaptive model. More recently, the lower and upper bounds for trees under the non-adaptive model were tightened by Harutyunyan et al.~\cite{HLMS11}, as well as the upper bounds on general graphs. Harutyunyan et al. also presented a polynomial-time dynamic programming algorithm for finding the non-adaptive broadcast time of any tree. The most recent papers on the matter, by Gholami and Harutyunyan~\cite{GholamiH22a,GholamiH22b} defined the fully adaptive model with universal lists. Under this model they computed the broadcast time of grids, tori, hypercubes, and Cube Connected Cycles (CCC).

%%%%%%%%%%%%%%%%%%%%%%%%%%%%%%%%%%%%%%%%%%%%%%%%%%%%%%
\section{Preliminaries} 
%%%%%%%%%%%%%%%%%%%%%%%%%%%%%%%%%%%%%%%%%%%%%%%%%%%%%%

Recall that, according to the definition of fully-adaptive source-oblivious broadcast introduced in \cite{GholamiH22a}, once a vertex $v$ is informed, say at round~$t$, it will follow its list of neighbors $\ell_v$ (cf. Eq.~\eqref{eq:thelists}), and pass the message to the first vertex on the list which is not already informed before round $t+1$. In other words, not only node~$v$ skips all its neighbors from which it  received the message, but $v$ also skip all other informed neighbors. Given a set $L=(\ell_v)_{v\in V(G)}$ of lists, the number of rounds for broadcasting a message from $s\in V(G)$ to all the other nodes of~$G$ using lists~$L$ under the fully adaptive model is denoted by $b_{\fa}(G,s,L)$. 
The broadcast time of a graph~$G$ under the fully adaptive model is then defined as
\[
b_{\fa}(G)=\min_L\max_{s\in V(G)}b_{\fa}(G,s,L).
\]
%
%
% It is obvious that for an arbitrary graph $G$, $b(G) \leq b_{fa}(G) \leq b_{a}(G) \leq b_{na}(G)$ \cite{GholamiH22b}. Considering graph $G=(V,E)$ in which $|V|=n$, a broadcast scheme for non-adaptive, adaptive, or fully-adaptive model can be viewed as a matrix $\sigma_{n\times \Delta}$, where row $i$ of $\sigma$ corresponds to an ordering of the neighbors of vertex $v_i$. Assuming this vertex has degree $d_i$, the cells $\sigma_{[i][d_i+1]},\sigma_{[i][d_i+2]},\cdots,\sigma_{[i][\Delta]}$ will be NULL. By definition: $\Delta=\max\{d_i: 1\leq i \leq n\}$. Denote all possible schemes by $\Sigma$. Let $M$ be one of the three models using universal lists ($M \in \{na,a,f a\}$) and fix a graph $G$. For any broadcast scheme $\sigma\in\Sigma$, we denote by $b_M^\sigma(v,G)$ the time steps needed to inform all the vertices in $G$ from the source $v$ while following the scheme $\sigma$ under model $M$. Moreover, the broadcast time of a graph $G$ under model $M$ with scheme $\sigma$, $b_M^\sigma(G)$, is defined as the maximum $b_M^\sigma(v,G)$ over all possible originators $v \in V(G)$. Lastly, $b_M(G)$ is the minimum $b_M^\sigma (G)$ over all possible schemes $\sigma\in\Sigma$:
% \begin{equation}
% \begin{split}
% b_M^\sigma(G) & = \max_{v \in V} \{b_M^\sigma(v, G)\} \\
% b_M(G) & = \min_{\sigma \in \Sigma} \{b_M^\sigma(G)\}
% % b_M(G) & = \max_{v \in V} \{b_M(v, G)\}
% \end{split}
% \end{equation}
%
The following results illustrate the above definition, and may serve as a warm up for the rest of the paper.  The first proposition shows that the broadcast time using protocols encoded with lists cannot be more than twice the broadcast time using general protocols. 

\begin{proposition}\label{prop:2b}
For every  graph $G=(V,E)$, the broadcast complexity of $G$ under the fully-adaptive source-oblivious model is at most $2\,\min_{s\in V}b(G,s)$, that~is,
\[b_{\fa}(G)\leq 2\; \min_{s\in V}b(G,s).\] 
\end{proposition}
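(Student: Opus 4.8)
The plan is to fix a source $s^{\star}$ achieving $b(G,s^{\star})=\min_{s\in V}b(G,s)=:\beta$, to take an optimal (general, source-aware) broadcast protocol $\mathcal{P}$ from $s^{\star}$, and to turn its \emph{broadcast tree} $T$, rooted at $s^{\star}$, into a set of universal lists. One may assume $\mathcal{P}$ never idles, so that, writing $r_v$ for the round at which $v$ is informed in $\mathcal{P}$, the $i$-th child $c_i^v$ that $v$ calls in $\mathcal{P}$ satisfies $r_{c_i^v}=r_v+i$, and $r_v\le\beta$ for all $v$. I would then define $\ell_{s^{\star}}=(c_1^{s^{\star}},c_2^{s^{\star}},\dots)$ and, for $v\neq s^{\star}$, $\ell_v=(p(v),c_1^v,c_2^v,\dots)$: each non-root node first tries to call its parent in $T$, then its children in the order dictated by $\mathcal{P}$. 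These are valid lists (distinct neighbors of $v$ in $G$), and the claim is that, from any source $s$, broadcast with these lists completes within $2\beta$ rounds.

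Fix a source $s$ and let $(s=v_0,v_1,\dots,v_{\ell}=s^{\star})$ be the $s$-to-$s^{\star}$ path in $T$, so $\ell=\mathrm{depth}_T(s)\le\beta$. Since every list contains only tree neighbors, all calls travel along edges of $T$; in particular, a node at tree-distance $d$ from $s$ cannot be informed before round $d$. In the \emph{upward phase}, because the parent comes first in every list, $v_i$ calls $v_{i+1}$ exactly one round after being informed, which together with the previous remark forces each $v_i$ to be informed precisely at round $i$; hence $s^{\star}$ is informed at round $\ell\le\beta$. The core of the argument is the \emph{downward} estimate: writing $t(v)$ for the round at which $v$ is informed from $s$ under these lists, I would show $t(v)\le\beta+r_v$ for every $v$, which yields $t(v)\le 2\beta$ since $r_v\le\beta$.

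To prove $t(v)\le\beta+r_v$, the path nodes are immediate, as $t(v_i)=i\le\beta$. For $v$ off the path, consider its ancestor chain $v=u_0,u_1,\dots,u_h=s^{\star}$ in $T$ and let $a\ge 1$ be least with $u_a$ on the path. First bound $t(u_{a-1})$: if $u_a=s^{\star}$, then $s^{\star}$ (whose list has \emph{no} parent entry) starts calling children at round $\ell+1$ and reaches $u_{a-1}$, which sits at list-position $r_{u_{a-1}}$, by round $\ell+r_{u_{a-1}}\le\beta+r_{u_{a-1}}$; if instead $u_a=v_i$ with $i<\ell$, then $v_i$ spends round $i+1$ calling its (still uninformed) parent and then reaches $u_{a-1}$, at list-position $1+(r_{u_{a-1}}-r_{u_a})$, by round $i+1+(r_{u_{a-1}}-r_{u_a})$, which is at most $\beta+r_{u_{a-1}}$ because $i\le\ell-1\le\beta-1$. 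Finally, walk down the chain: each $u_j$ with $0\le j<a$ lies in a subtree of $T$ not containing $s$, hence is informed by its parent $u_{j+1}$ across the unique edge leaving that subtree, so $u_j$ skips the parent entry of $\ell_{u_j}$ at no cost and reaches $u_{j-1}$ by round $t(u_j)+(r_{u_{j-1}}-r_{u_j})$; an immediate induction then gives $t(u_{j-1})\le\beta+r_{u_{j-1}}$, and in particular $t(v)=t(u_0)\le\beta+r_v$.

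The main obstacle is the off-by-one hidden in that step: with parent-first lists the upward phase is clean, but a path node must burn one round calling its still-uninformed parent before it can start feeding its own subtree. The crucial observation is that this only happens at nodes informed at round at most $\ell-1\le\beta-1$ — the root $s^{\star}$, whose list carries no parent entry, being the unique node informed at round $\ell$ — so that lost round is always absorbed within the $2\beta$ budget. The remaining ingredients (an optimal protocol may be taken idle-free, all calls cross only tree edges, and a subtree of $T$ missing $s$ is entered solely through its root edge) are routine structural observations.
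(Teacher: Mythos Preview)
Your proof is correct and follows essentially the same approach as the paper: the same parent-first list construction from an optimal broadcast tree rooted at a minimizing source, and the same upward/downward decomposition. Your downward analysis via the invariant $t(v)\le\beta+r_v$ is more detailed than the paper's, which simply observes that once the root is reached the parent entry in every list is skipped so the original $\beta$-round schedule replays; both arguments reach the same $2\beta$ bound.
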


\begin{proof}
Let $s\in V$ with minimum broadcast time among all nodes in~$G$, i.e., $b(G,s)=\min_{s'\in V}b(G,s')$. Let $T$ be a broadcast tree rooted at $s$, enabling broadcast from $s$ in $b(G,s)$ rounds. (Such a tree is a spanning tree of~$G$ rooted at~$s$ in which the children of any node~$v$ are ordered, specifying the order in which these children must be informed by~$v$ upon reception of a message from~$s$.) Let $v\neq s$ be a node of~$T$, let $w$ be the parent of $v$ in~$T$, and let $u_1,\dots,u_d$ be the $d$ children of $v$ in~$T$ enumerated in the order in which they are called in an optimal broadcast protocol from $s$ in~$T$. The list assigned to~$v$ is 
\[
\ell_v=(w,u_1,\dots,u_d).
\]
Similarly, the list assigned to~$s$ is 
\[
\ell_s=(u_1,\dots,u_d)
\]
where $u_1,\dots,u_d$ are the $d$ children of $s$ in~$T$ enumerated in the order in which they are called in an optimal broadcast protocol from $s$ in~$T$. 

To show that this set of lists enable fully adaptive broadcast to perform in at most $2b(G,s)$ rounds, let $u$ be a source node. Following the assigned universal lists every vertex makes the first call to its parent once gets informed. Thus, a message broadcast from~$u$ reaches the root~$s$ of~$T$ after at most $\mbox{depth}(T)$ rounds. Since $b(G,s)\geq \mbox{depth}(T)$, the message reaches~$s$ after at most $b(G,s)$ rounds. 
Once at~$s$, the message is broadcast down the tree in at most $b(G,s)$ rounds since, for every node~$v\neq s$, the parent~$w$ of~$v$ in the list $\ell_v$ is skipped in the fully adaptive model. The upward and downward phases amount for a total of $2\, b(G,s)$ rounds. 
\end{proof}

The following is a direct consequence of Proposition~\ref{prop:2b}.

\begin{corollary}
For every  graph $G$, the broadcast complexity of $G$ under the fully-adaptive source-oblivious model is at most $2\,b(G)$, i.e., $$b_{\fa}(G)\leq 2\,b(G).$$
\end{corollary}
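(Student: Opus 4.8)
The plan is to obtain the bound directly from Proposition~\ref{prop:2b}, together with the elementary fact that a minimum over a nonempty set is at most the maximum over that same set.

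First I would invoke Proposition~\ref{prop:2b}, which asserts that for every graph $G=(V,E)$ we have $b_{\fa}(G)\leq 2\min_{s\in V}b(G,s)$. Then I would simply note that, $V$ being nonempty,
\[
\min_{s\in V}b(G,s)\;\leq\;\max_{s\in V}b(G,s)\;=\;b(G),
\]
the last equality being the definition of the broadcast time $b(G)$ recalled in the introduction. Chaining the two inequalities gives $b_{\fa}(G)\leq 2\min_{s\in V}b(G,s)\leq 2\,b(G)$, which is the claim.

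There is essentially no obstacle: all the work is already done in Proposition~\ref{prop:2b}, and replacing the best source by the worst source only weakens the inequality. The one point worth spelling out, if one wants the argument to be self-contained, is that the collection of lists $L$ witnessing $b_{\fa}(G)\leq 2\min_{s\in V}b(G,s)$ is a single fixed collection (built from a broadcast tree rooted at a best source), so that the outer $\min_L$ in the definition of $b_{\fa}(G)$ is already realized by that choice and no further optimization over $L$ is needed.
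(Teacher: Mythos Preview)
Your proposal is correct and matches the paper's approach exactly: the paper simply states that the corollary is a direct consequence of Proposition~\ref{prop:2b}, which is precisely what you do by combining that proposition with the trivial inequality $\min_{s\in V}b(G,s)\leq \max_{s\in V}b(G,s)=b(G)$.
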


Note that Proposition~\ref{prop:2b} also holds in the adaptive source-oblivious model (in this model every node~$v$ just skips the neighbors in~$\ell_v$ from which it received the message). The second proposition shows that there is a price to pay for using broadcast protocols encoded with lists, in the sense that the broadcast time using such protocols may be larger than the broadcast time using general protocols. 

\begin{proposition}\label{prop:lafamille}
There is an infinite family of graphs  $\mathcal{F}$ such that, for every $G\in \mathcal{F}$, the broadcast complexity of $G$ in the fully-adaptive source-oblivious model is larger than its broadcast time, that is $b_{\fa}(G)>b(G)$. 
\end{proposition}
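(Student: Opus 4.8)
The statement asks for an infinite family $\mathcal{F}$ of graphs $G$ with $b_{\fa}(G) > b(G)$. The most natural candidates are graphs $G$ for which $b(G) = \lceil\log_2 n\rceil$ (broadcast graphs), because then proving $b_{\fa}(G) > b(G)$ reduces to showing that no collection of universal lists achieves the information-theoretic optimum $\lceil\log_2 n\rceil$. For such a separation to be robust, I would look for graphs that are broadcast graphs but are "just barely" so — sparse enough that the doubling bound is met only by a unique (or very rigid) broadcast scheme from each source, so that a single list at each node cannot simultaneously serve all sources optimally.

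**Candidate family and the core argument.** The cleanest choice is a family built from hypercubes, or small perturbations of them. Consider $Q_d$ on $n = 2^d$ nodes: here $b(Q_d) = d$, and (as the excerpt already notes) optimality forces every node, when it is the source, to be active in all $d$ rounds, so each node must use all $d$ of its incident edges in some order. The key obstruction to exploit: in the fully-adaptive source-oblivious model, each node $v$ commits to one permutation $\ell_v$ of its neighbors, and this permutation must yield an optimal broadcast from $v$ itself and also cooperate with the (fixed) lists of all other nodes when some other vertex is the source. I would argue that for a suitable such $G$ these constraints are jointly infeasible: fix the lists $L$; consider the source $s$; at round $1$, $s$ calls the first neighbor $u_1$ in $\ell_s$; but then $u_1$'s own list $\ell_{u_1}$ is frozen, and its first entry might be $s$ (already informed, hence skipped) — costing no time — or might be some vertex that, from the perspective of source $s$, needs to wait. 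By a counting/parity argument over all $n$ sources, I would show that some source is forced to lose at least one round. A concrete small case — e.g. exhibiting explicitly that $Q_3$ on $8$ nodes, or a specific $9$- or $10$-node broadcast graph, has $b_{\fa} > \lceil\log_2 n\rceil$ by case analysis over all list assignments — then serves as the "seed," and one extends to an infinite family by taking disjoint-ish combinations or a parametrized sequence (e.g. $Q_d$ for all $d$ in some range, or attaching pendant structures that preserve both $b(G)$ and the obstruction).

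**The main obstacle.** The hard part is the lower bound argument: showing that \emph{no} list assignment $L$ achieves the optimum for \emph{every} source simultaneously. For a single fixed source this is easy, but the adversary here picks $L$ first and only then do we get to pick the worst source, so I must find a structural invariant of $L$ that some source always exploits. I expect the right tool is a potential/counting argument: sum, over all sources $s$, the "slack" (number of rounds beyond $\lceil\log_2 n\rceil$) and show this sum is strictly positive for any $L$, using the rigidity of optimal broadcast in the chosen graphs (each edge is "used in round $r$ from source $s$" for a forced $r$, and a single linear order $\ell_v$ cannot realize all the incompatible round-assignments demanded of $v$ across different sources). Making this incompatibility precise — identifying two sources $s, s'$ and a vertex $v$ such that optimality from $s$ forces a neighbor $a$ to precede $b$ in $\ell_v$ while optimality from $s'$ forces $b$ to precede $a$ — is the crux; the cleanest route is probably to nail it down on an explicit small graph and then lift.

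**Extension to an infinite family.** Once one graph $G_0$ with $b_{\fa}(G_0) > b(G_0)$ is in hand, producing infinitely many is routine: I would either (i) take $G_0$ and attach, at a designated vertex, a path or balanced binary tree of appropriate depth so that $b(G)$ stays equal to $\lceil \log_2 n \rceil$ while the obstruction inside the $G_0$-part persists, or (ii) directly verify the obstruction for a whole parametrized sequence (such as all hypercubes $Q_d$, $d \ge d_0$, if the argument is uniform in $d$). Either way the infinitude is the easy part and the real content is the single-graph separation described above.
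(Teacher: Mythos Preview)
Your plan has a genuine gap: your primary candidate family, the hypercubes $Q_d$, does \emph{not} separate the two models. In fact $b_{\fa}(Q_d)=d=b(Q_d)$ for every $d$ (this is cited explicitly in the paper, from \cite{GholamiH22b}, and is also used in the proof of Theorem~\ref{theo:BG}). The ``rigidity'' intuition you describe---that a single permutation at each vertex cannot serve all sources optimally---simply fails for $Q_d$: the dimension-ordered list $\ell_v=(1,2,\dots,d)$ at every vertex works for every source in the fully-adaptive model. So the counting/parity argument you sketch cannot go through for hypercubes, and you have no concrete graph in hand on which to run the ``incompatible precedence'' idea. Your proposal is therefore only a heuristic outline whose main instantiation is known to be false; the ``crux'' you identify is never actually carried out on any specific graph.

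The paper takes a completely different and much more elementary route. It uses the family $G_k$ obtained by gluing two odd cycles $C_{2k+1}$ at a single cut vertex~$v$, so $|V(G_k)|=4k+1$ and $b(G_k)=2k+1$. The argument is a two-line case analysis: whatever list $\ell_v$ is chosen, let $u$ be its first entry, and take as source the vertex $w$ antipodal to the edge $\{u,v\}$ in the cycle containing~$u$. Either $w$'s first call goes the ``wrong'' way around that cycle (then $v$ is reached too late), or it goes the ``right'' way (then $v$ is reached at round~$k$, but $v$ wastes round $k+1$ calling the not-yet-informed~$u$ before it can start on the other cycle). Either way $b_{\fa}(G_k)\geq 2k+2>b(G_k)$. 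Note that these graphs are far from being broadcast graphs in the $\lceil\log_2 n\rceil$ sense; the separation is achieved by exploiting a cut vertex and the rigidity of broadcast in cycles, not by saturating the doubling bound.
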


\begin{figure}[tb]
  \centering
  \includegraphics[width=.5\linewidth]{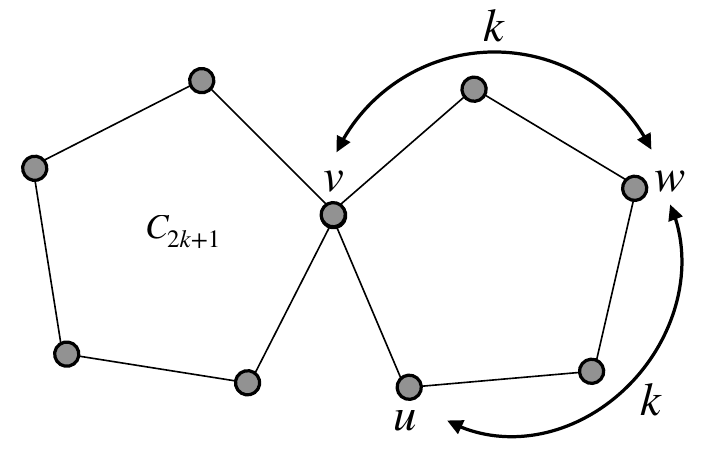}
  \caption{Graph in the proof of Proposition~\ref{prop:lafamille}, for $k=2$.}
  \label{fig:2cycless}
\end{figure}

\begin{proof}
    A basic example of such a family of graphs is obtained from two cycles $C_{2k+1}$ of length $2k+1$, by merging two of their vertices into one (see Fig.~\ref{fig:2cycless}). For every $k\geq 1$, the resulting graph~$G_k$ has $4k+1$ vertices, with a cut vertex~$v$ at the intersection of the two cycles. It is known~\cite{HedetniemiHL88} that $b(C_{2k+1})=k+1$, from which it follows that $b(G_k)=2k+1$. Now, in the fully-adaptive source-oblivious model, let $\ell_v$ be the list of node~$v$, and let~$u$ be the neighbor of~$v$ occurring first in this list. Let $w$ be the vertex antipodal to the edge $\{u,v\}$ in the cycle containing both $u$ and~$v$. If $w$ does not call first its neighbor on the shortest path from $w$ to~$v$ in~$G_k$, then broadcast from~$w$ will take at least~$2k+2$ rounds. On the other hand, if $w$ calls first its neighbor on the shortest path from $w$ to~$v$ in~$G_k$, then $v$ will receive the information in round~$k$. Since~$u$ has not yet received the information at the end of round~$k$, $v$~will proceed according to its list~$\ell_v$, and call~$u$ at round~$k+1$. As a result, $v$ will start broadcasting in the other cycle (the one not containing $u$ and $w$), no sooner than round $k+2$, and thus the whole protocol will not complete before $2k+2$ rounds. 
\end{proof}

%%%%%%%%%%%%%%%%%%%%%%%%%%%%%%%%%%%%%%%%%%%%%%%%%%%%%%
\section{Broadcast Graphs} 
%%%%%%%%%%%%%%%%%%%%%%%%%%%%%%%%%%%%%%%%%%%%%%%%%%%%%%

This section is entirely devoted to the proof of Theorem~\ref{theo:BG}. Recall that this theorem states that, in the fully-adaptive source-oblivious model,  there are $n$-node broadcast graphs for every $n\geq 1$, i.e., $n$-node graphs for which there exists a collection of lists $(\ell_v)_{v\in V(G)}$ achieving broadcast  in $\lceil\log_2n\rceil$ rounds from any source node.

\begin{proof}[Proof of Theorem~\ref{theo:BG}]
Recall that, for any $d\geq 1$, $b_{\fa}(Q_d)=d$ where $Q_d$ denoted the $d$-dimensional hypercube (see~\cite{GholamiH22b}) 
% and so for $n=2^m$ $B_{fa}(n) = \frac12n \lceil \log n\rceil = \Theta(n\log n)$. 
Therefore, it is sufficient to focus on graphs whose number of vertices is not a power of~2. So, let $n\ne 2^m$, with $\lceil \log_2 n \rceil = m$. Such an $n$ can be written as 
$$n = 2^{m-d_1} + 2^{m-d_2} + \dots + 2^{m-d_k},$$
where $1=d_1< d_2 < \dots < d_k \leq m$ for some $k\geq 2$. For such an~$n$, we consider the graph derived from the disjoint union of $k$ hypercubes 
$$Q_{m-1}, Q_{m-d_2},\dots,Q_{m-d_k}$$ 
with respective dimensions $m-1, m-d_2, \dots, m-d_k$. Let the vertices of $Q_{m-1}$ be labeled by the bit-strings 
$$0\alpha_1\dots\alpha_{m-1}$$ 
for all $\alpha_1\dots\alpha_{m-1} \in \{0,1\}^{m-1}$. Similarly, for every $i\in \{2, \dots,k\}$, let each of the $2^{m-d_i}$ vertices of hypercube $Q_{m-d_i}$ be labeled with an $m$-bit label having $d_i-1$ leading 1's followed by~0, and then one of the $2^{m-d_i}$ binary strings of length $m-d_i$ (see Fig.~\ref{fig:hypercubeBG}). Let us then define the graph $G=(V,E)$ where 
$$V=V(Q_{m-1})\cup V(Q_{m-d_2})\cup\dots\cup V(Q_{m-d_k})$$
and 
$$E=\{\{x,y\}\mid \mbox{$x$ and $y$ differ in exactly one bit}\}.$$ 
According to the labeling of the vertices defined above, every vertex in $V$ not in $Q_{m-1}$ is connected to a vertex of $Q_{m-1}$ by an edge in dimension~1 (i.e., the two vertices differs only in the first bit of their labels). More generaly, every vertex in $Q_{m-d_{i+1}}$ is connected to one vertex in each of the hypercubes $Q_{m-1}, Q_{m-d_2},\dots,Q_{m-d_i}$, by respective dimension $1, d_2,\dots,d_i$. Overall, $G$ has at most $\frac12n\lceil \log_2 n\rceil$ edges, as $G$ is a subgraph of the $m$-dimensional hypercube~$Q_m$. 

\begin{figure}[tb]
  \centering
  \includegraphics[width=.8\linewidth]{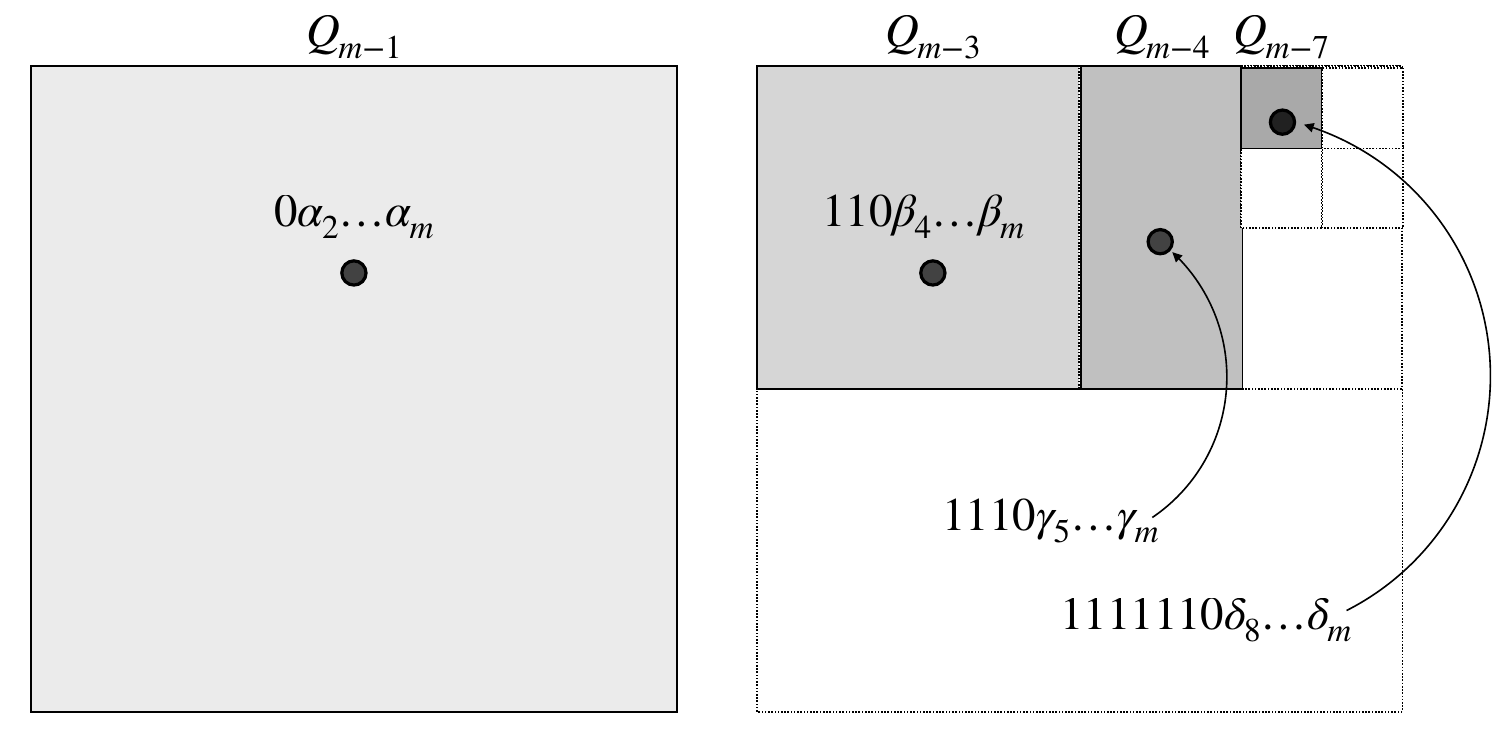}
  \caption{Construction in the proof of Theorem~\ref{theo:BG}. For every~$i$, $\alpha_i,\beta_i,\gamma_i,\delta_i\in\{0,1\}$. }
  \label{fig:hypercubeBG}
\end{figure}

\medskip

We assign the lists to the vertices as follows. Thanks to the structure of the graph~$G$, instead of using lists of neighboring vertices, we merely use list of integers, each one representing a \emph{dimension}, where, as in an $m$-dimensional hypercube, the neighbor of vertex $x_1\dots x_m$ in dimension $i$ is the vertex 
$$x_1\dots x_{i-1}\overline{x_i}x_{i+1}\dots x_m$$
with $\bar{x}=1-x$. For any vertex $u$ of $Q_{m-d_i}$ for some $1 \leq i \leq k$, its list is 
$$\ell_u = d_i+1,\dots,m,1,\dots,d_i.$$ 
So, in particular, for every vertex $u$ of $Q_{m-1}$, its list is $\ell_u = 2,3,\dots,m,1$. However, these lists must be used with the additional rule that if a node has no neighbors in the specified dimension, then this dimension is skipped, and the next dimension is considered. For example, a node $0\alpha_2\dots\alpha_m$ of $Q_{m-1}$ may not have a neighbor in dimension~1. This is for instance the case of every node $00\alpha_3\dots\alpha_m$ in the graph of Fig.~\ref{fig:hypercubeBG} because there is no hypercube $Q_{m-2}$ in this graph, and thus no nodes labeled $10\alpha_3\dots\alpha_m$. On the other hand, a node $010\alpha_4\dots\alpha_m$ in this graph has a neighbor in dimension~1, namely node $110\alpha_4\dots\alpha_m$, which appears in~$Q_{m-3}$. 

\medskip

We are now going to prove that these lists enable to establish that $b_{\fa}(G) = \lceil \log_2 n\rceil$. 
If the source $u$ is in $Q_{m-1}$ then, by following the  lists $\ell_v$ for all $v\in Q_{m-1}$, all vertices of $Q_{m-1}$ will be informed by round $m-1$. This can be checked easily, but this is a mere consequence of the fact that any hypercube is a broadcast graph under the fully-adaptive model~\cite{GholamiH22b}. At  round~$m$,  all the vertices of $Q_{m-1}$ that are connected to another vertex outside of $Q_{m-1}$ sends the information along dimension~1. Thus, at round~$m$, every vertex~$v$ of hypercube $Q_{m-d_i}$ with $m$-bit label $11\dots 10\alpha_1\alpha_2\dots \alpha_{m-d_i}$ will receive the message from the vertex of $Q_{m-1}$ with label $01\dots 10\alpha_1\alpha_2\dots\alpha_{m-d_i}$. 

If the source $u$ is in $Q_{m-d_j}$ for some $j>1$, then  its label is of the form $1\dots 10\alpha_1\dots \alpha_{m-d_j}$. As for $u\in Q_{m-1}$, according to  their lists 
$$\ell_v = d_j+1,\dots,m, 1,\dots,d_j,$$ 
all vertices~$v$ of $Q_{m-d_j}$ have received the information after $m-d_j$ rounds. Starting from round $m-d_j+1$ until  round~$m$, all informed vertices of $Q_{m-d_j}$ will then inform their neighbors in all the other hypercubes. 
\begin{itemize}
\item For the hypercubes $Q_{m-1},\dots,Q_{m-d_{j-1}}$, at round $m-d_j+1$, all $2^{m-d_j}$ vertices of $Q_{m-1}$ with labels of the form $01\dots 10\alpha_1\alpha_2\dots\alpha_{m-d_j}$ receive the message from their respective neighbor $11\dots 10\alpha_1\alpha_2\dots\alpha_{m-d_j}$  in $Q_{m-d_j}$. During the next $d_j-1$ rounds, these vertices will complete broadcasts in parallel in disjoint $(d_j-1)$-dimensional sub-cubes of $Q_{m-1}$ following the order $2,3,\dots,d_j$ of their lists. Similarly, at round $m-d_j+2$, all $2^{m-d_j}$ vertices of $Q_{m-d_2}$ with labels of the form $1\dots 101\dots 10\alpha_1\alpha_2\dots\alpha_{m-d_j}$ receive the message from their respective neighbor $1\dots 111\dots 10\alpha_1\alpha_2\dots\alpha_{m-d_j}$ in $Q_{m-d_j}$, via dimension~$d_2$. During the next $d_j-d_2\leq d_j-2$ rounds, these vertices will complete broadcasts in parallel in disjoint $(d_j-d_2)$-dimensional sub-cubes of $Q_{m-d_2}$ following the order $d_2+1,\dots,d_j$ of their lists. The same holds for all hypercubes $Q_{m-1},\dots,Q_{m-d_{j-1}}$. 

\medskip

\item The case of the hypercubes $Q_{m-d_{j+1}},\dots,Q_{m-d_k}$ of lower dimension is easier, for it is similar to the case where the source node is in~$Q_{m-1}$. At round $m-d_j+j\leq m$ all vertices $1\dots,10\alpha_1\alpha_2\dots\alpha_{m-d_j}$ of $Q_{m-d_j}$ will inform the at most $2^{m-d_j}$ nodes $Q_{m-d_{j+1}},\dots,Q_{m-d_k}$ of the form $1\dots,11\alpha_1\alpha_2\dots\alpha_{m-d_j}$, via dimension~$d_j$, completing broadcast in the lower dimensional hypercubes. 
\end{itemize}
It follows that $b_{\fa}(G) = \lceil \log_2 n\rceil$, as claimed.
Finally, the fact that, for every ${n\geq 1}$, the broadcast time of the clique~$K_n$ is $b_{\fa}(K_n)=\lceil\log_2n\rceil$ in the fully-adaptive source-oblivious model directly follows from the fact that the graph~$G$ used to establish the theorem is a subgraph of~$K_n$. 
\end{proof}

%%%%%%%%%%%%%%%%%%%%%%%%%%%%%%%%%%%%%%%%%%%%%%%%%%%%%%
\section{Minimum Broadcast Graphs} 
%%%%%%%%%%%%%%%%%%%%%%%%%%%%%%%%%%%%%%%%%%%%%%%%%%%%%%

This section is entirely devoted to the proof of Theorem~\ref{theo:MBG}. Recall that this theorem states that, in the fully-adaptive source-oblivious model, for every $n\geq 1$, there are $n$-node broadcast graphs with $O(n \cdot L(n))$ edges.

\begin{proof}[Proof of Theorem~\ref{theo:MBG}]
Recall that the $d$-dimensional hypercubes 
are minimum broadcast graphs for the fully adaptive model, i.e.,  for $n=2^d$, $B_{\fa}(n) = \frac12n \log n$ (see~\cite{GholamiH22b}).
For any $n$ not a power of~2, we will construct a graph on $n$ vertices and $O(nL(n))$ edges which is a broadcast graph under the 
fully adaptive model. More precisely, our graph has $n(L(n)+1)$ edges. Its construction is directly inspired from the construction in~\cite{GrigniP91}. Any $n$ not a power of~2 can be presented as 
$$n = 2^m - 2^k - r,$$ 
where $0 \leq k \leq m-2$ and $0 \leq r \leq 2^k - 1$. 

We begin the construction of our graph $G = (V,E)$ using $m-k$ binomial trees $T_{m-1},T_{m-2},\dots,T_{k}$ of sizes 
$2^{m-1},2^{m-2},\dots,2^k$, respectively rooted at vertices $v_{m-1},v_{m-2},\dots,v_{k}$. Recall that the binomial tree of size~1 consists of a single node (which is the root of the tree), and, for $d>0$, the binomial tree of size $2^d$ is obtained by connecting the two roots of two copies of a binomial tree of size $2^{d-1}$ by an edge, and selecting one of these two roots as the root of the resulting tree. The union of the trees $T_{m-1},T_{m-2},\dots,T_{k}$ contains $2^m - 2^k$ vertices and $2^m - 2^k - (m-k)$ edges. 

Next, we delete $r$ vertices (and $r$ edges) from $T_{k}$ by repeating $r$~times the removal of a leaf that is furthest away from the root of~$T_{k}$. In order to simplify the notation, we will abuse notation, and still call the resulting tree~$T_{k}$.  Note that since $r \leq 2^k - 1$, $T_k$ is not empty. 
This union of the trees $T_{m-1},\dots,T_{k}$ now contains $n= 2^m - 2^k - r$ vertices, and $2^m - 2^k - r - (m - k) = n-(m-k)$ edges. To complete the construction of our graph~$G$, we connect every vertex of $V=\cup_{i=k}^{m-1}V(T_i)$ to all the roots $v_{m-1},\dots,v_{k}$ of the $m-k$ binomial trees. Thus, the graph $G=(V,E)$ will have 
$$|V| = n = 2^m - 2^k - r$$ 
vertices, and 
$$|E| = n-(m-k) + (m-k)(n-1) = (m-k+1)n - 2(m-k)$$ 
edges.

To show that $b_{\fa}(s) = \lceil \log_2 n \rceil$ for any originator $s \in V$, we first assign the lists of all vertices in~$V$. Observe that each vertex $u$ of $G$ is actually the root of some binomial tree~$T_{m-p}$ for some $1 \leq p \leq m$ --- by definition of binomial trees. Let us denote $u=\root(T_{m-p})$. ALso, for the root~$w$ of a binomial tree of dimension at least $m-(p+1)$, let $c_{m-p}(w)$ be the child of~$w$ that is the root of a binomial tree of dimension~$m-p$. 
The list $\ell_{u}$ assigned to a vertex $u = c_{m-p}(w)$ distinct from $v_{m-1},\dots,v_k$ is 
\begin{equation}\label{eq:list-of-u}
 \ell_{u} = v_{m-1}, v_{m-2}, \dots, v_{m-p}, c_{m-p-1}(u), c_{m-p-2}(u), \dots , c_{0}(u).
\end{equation}
For a root vertex $u = v_{m-i}$ for some $1 \leq i \leq m-k$, we set 
\begin{equation}\label{eq:list-of-root-v}
\ell_{v_{m-i}} = v_{m-1}, v_{m-2}, \dots, v_{m-i},  
c_{m-i-1}(v_{m-i}), 
c_{m-i-2}(v_{m-i}), \dots, c_{0}(v_{m-i}).
\end{equation}
These lists have a desirable elementary property: 

\begin{fact}\label{basic-remark}
Any vertex $w$ that is the root of binomial tree of dimension~$d$ will inform all the vertices of its binomial tree during the last $d$ rounds of broadcast. Indeed, each of $w$'s children $c_{l-1}(w), c_{l-2}(w), \dots, c_{0}(w)$ will receive the message from $w$ at round $m-l+1, m-l+2, \dots, m$, respectively, and, by following their own lists, each will complete broadcast within its binomial trees of respective dimensions $d-1, d-2, \dots, 0$. As a result, all the vertices in the binomial tree of dimension $d$ rooted at $w$ will be informed by round $m$. 
\end{fact}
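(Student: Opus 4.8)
The plan is to prove Fact~\ref{basic-remark} by strong induction on the dimension~$d$ of the binomial tree rooted at~$w$, reading the fact as a conditional whose hypothesis --- that the vertices involved are informed on time --- is supplied by a companion argument; the cleanest route is to prove Fact~\ref{basic-remark} jointly with the auxiliary statement that every root $v_{m-i}$ is informed by round~$i$. Precisely, the inductive claim I would carry is: \emph{if $w$ roots a binomial tree $B$ of dimension~$d$, is informed by round~$m-d$ (or is the source, hence informed at round~$0$), and every root vertex $v_j$ occurring in $\ell_w$ or in the list of a descendant of~$w$ is informed by round~$m-j$, then every vertex of~$B$ is informed by round~$m$.} The base case $d=0$ is immediate, since $B=\{w\}$ and $w$ is informed by round~$m$ by hypothesis. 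For $d\ge 1$ the argument uses two structural facts: (i)~a dimension-$d$ binomial tree is $w$ together with the $d$ pairwise vertex-disjoint binomial subtrees rooted at the children $c_{d-1}(w),\dots,c_0(w)$, of respective dimensions $d-1,\dots,0$; and (ii)~the explicit form of $\ell_w$ from~\eqref{eq:list-of-u} (or~\eqref{eq:list-of-root-v} when $w$ is one of the roots $v_{m-i}$), namely $\ell_w = v_{m-1},\dots,v_d,\,c_{d-1}(w),\dots,c_0(w)$, whose $m-d$ leading entries are root vertices and whose remaining entries are exactly the children of~$w$.

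The inductive step is a short timing computation. By the round at which $w$ reaches each leading root $v_j$ ($d\le j\le m-1$) in its scan of $\ell_w$, that root is either already informed (it is informed by round $m-j\le m-d$) or, in the source case, is informed by $w$ itself within its first $m-d$ rounds; since in the fully-adaptive model moving past an already-informed entry of the list consumes no round, $w$ ends up forwarding the message to $c_{d-1}(w),c_{d-2}(w),\dots,c_0(w)$ at rounds $m-d+1,m-d+2,\dots,m$ (a round earlier in the degenerate case where $w$ is a root that is itself the source). Moreover each $c_{d-i}(w)$ appears in no list other than~$\ell_w$, so it can only be informed by~$w$, hence by round $m-(d-i)$. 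Thus $c_{d-i}(w)$, the root of a dimension-$(d-i)$ binomial tree with $d-i<d$, is informed by round $m-(d-i)$, and the roots relevant to its subtree form a subset of those relevant to~$w$, so the hypothesis still holds; the induction hypothesis then yields that the subtree of $c_{d-i}(w)$ is fully informed by round~$m$. Taking the union over $i=1,\dots,d$ together with~$w$, all of~$B$ is informed by round~$m$, which establishes Fact~\ref{basic-remark}.

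The main obstacle is not this induction but its hypothesis: that every root $v_{m-i}$ is informed by round~$i$, and that every vertex rooting a dimension-$d$ subtree is informed by round~$m-d$. These are global properties of the broadcast, and establishing them is where the rest of the proof of Theorem~\ref{theo:MBG} lies; they hinge on the two features of~$G$ beyond its binomial-forest skeleton --- the $m-k$ roots $v_{m-1},\dots,v_k$ are pairwise adjacent and adjacent to every vertex, and every list starts with a decreasing run of root labels --- so that, from any source, the opening rounds drive the message through the roots at the required pace, after which the induction above takes over. I would isolate ``every root $v_{m-i}$ is informed by round~$i$'' as a lemma proved first (or interleaved with the induction above). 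Its delicate case, which I expect to be the real sticking point, is when the source is itself a root $v_{m-j}$: then $\ell_{v_{m-j}}$ does not list the lower-dimensional roots $v_{m-j-1},\dots,v_k$ ahead of its children, so those roots have to be reached through the source's descendants, and one must verify they are still informed on schedule. A much smaller bookkeeping point is that, in the pruned copy of $T_k$, some list entries point to deleted vertices; these are simply skipped at no cost, which can only make the affected subtrees complete sooner.
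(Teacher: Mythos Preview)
Your proposal is correct and follows essentially the same approach as the paper: the paper's justification of Fact~\ref{basic-remark} is the informal recursive observation embedded in the statement itself (each child $c_{d-i}(w)$ is reached at round $m-d+i$ and recursively completes its own subtree by round~$m$), and your strong induction on~$d$ is a rigorous formalization of exactly this argument. Your explicit inclusion, in the inductive hypothesis, of the condition that the leading root entries $v_{m-1},\dots,v_d$ are already informed (hence skipped at no cost) is more careful than the paper, which leaves this implicit in Fact~\ref{basic-remark} and only establishes it in the surrounding case analysis of Theorem~\ref{theo:MBG}.
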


\begin{figure}[tb]
  \centering
  \includegraphics[width=.9\linewidth]{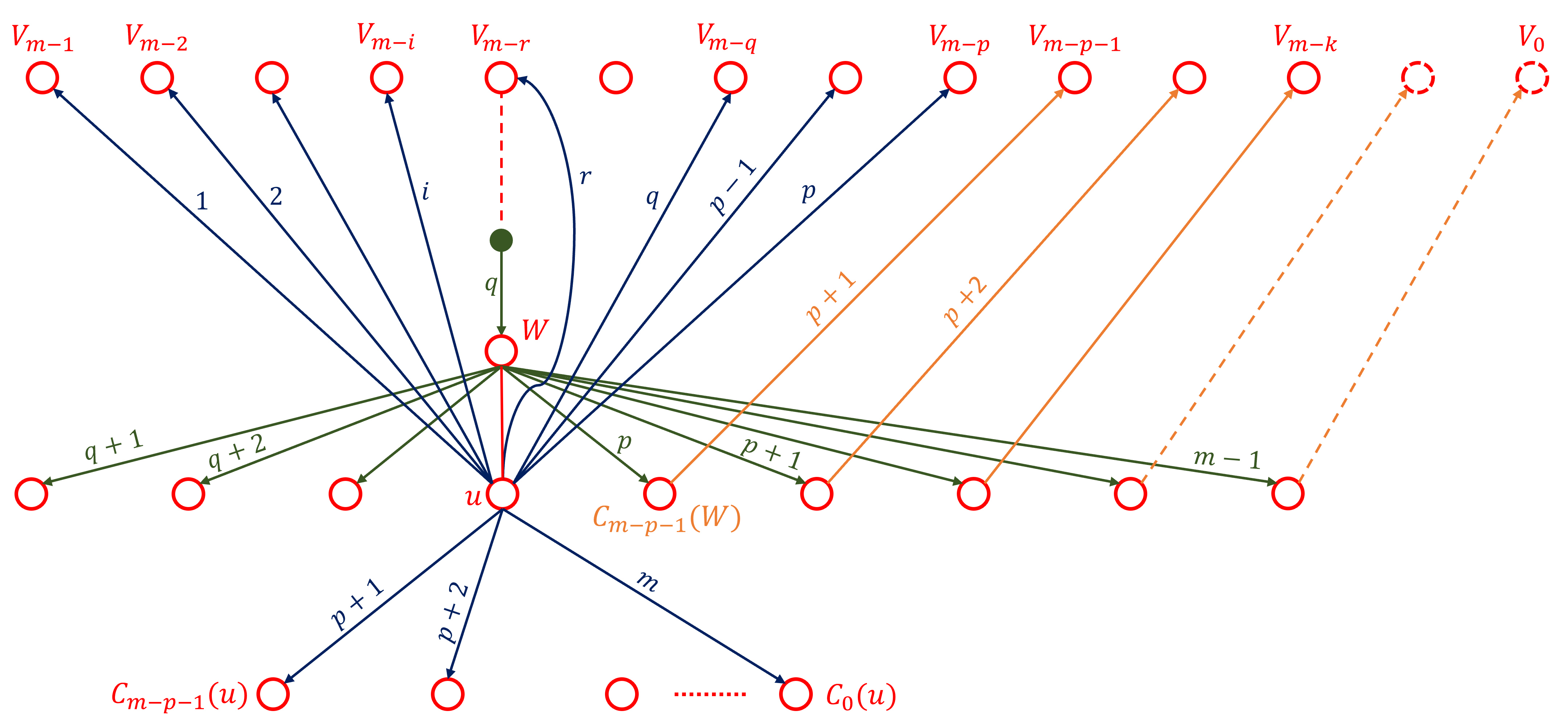} 
  \caption{Broadcast from source node $u$}
  \label{fig:treeMBG1}
\end{figure}

More generally, let us consider a vertex $u\in V$ as the source of broadcast, and let us assume that $u$ is the root of a binomial tree of dimension~$m-p$ (see Figure ~\ref{fig:treeMBG1}). Also, as $u$ belongs to one of the trees $T_{m-1},\dots,T_{k}$, let us assume that it belongs to tree $T_{m-r}$ rooted at vertex $v_{m-r}$, where $m-r > m-p$. Following its  list defined in Eq.~\eqref{eq:list-of-u},  vertex~$u$ first informs the root vertices $v_{m-1},\dots, v_{m-p}$ during the first $p$ rounds, and then it informs all of its $m-p$ children in its binomial tree of dimension $m-p$ during rounds $p+1, p+2,\dots, p+(m-p)$, completing after $m$ rounds.  

All root vertices $v_{m-1}, v_{m-2},\dots, v_{m-p}$ will receive the information from $u$ at rounds $1,2,\dots,p$, respectively, and will act according to their lists defined in Eq.~\eqref{eq:list-of-root-v}. In particular, following its list $\ell_{v_{m-1}}$, node~$v_{m-1}$ will skip $v_{m-1}$, and will inform all of its children within the binomial tree $T_{m-1}$ during the remaining $m-1$ rounds. Thus, all vertices of tree $T_{m-1}$ will be informed by round~$m$ thanks to Fact~\ref{basic-remark}. In general, any root vertex $v_{m-q}$ for $q=1,2,\dots,p$, except the root vertex $v_{m-r}$,  will receive the message from $u$ at round~$q$, and will follow its list $\ell_{v_{m-q}}$ specified in Eq.~\eqref{eq:list-of-root-v}. Since by round~$q$ all root vertices $v_{m-1}, v_{m-2},\dots,v_{m-q-1}$ are already informed by $u$, the root $v_{m-q}$ will skip the vertices $v_{m-1},\dots, v_{m-q}$ in its list, and, starting from round $q$, it will inform all its $m-q$ children $c_{m-q-1}(v_{m-q}), c_{m-q-2}(v_{m-q}), \dots, c_{0}(v_{m-q})$. Again, all vertices of the binomial tree $T_{m-q}$ will be informed by time unit $m$ thanks to Fact~\ref{basic-remark}.

Next, let us describe how broadcast proceeds in the binomial tree~$T_{m-r}$, as well as in the binomial trees $T_{m-p-1}, T_{m-p-2}, \dots, T_{k}$. Since $m-r > m-p$, the root $v_{m-r}$ of tree $T_{m-r}$  will receive the information from $u$ at round $r$, and, as mentioned above, will inform all of its children in its binomial tree, starting at time unit $r+1$. Let us assume that the parent~$w$ of $u$ in $T_{m-r}$ receives the information at round~$s$, which means that $w$ is the root of a binomial tree of dimension~$m-s$. We have $m-p < m-s \leq m-r$. Vertex $w$ will follow its list, and  will inform all its children, starting from round $m-s+1$. Following its list, node~$w$ had to inform vertex~$u$ at round~$p$.  However, since $u$ is already informed, $w$~will skip~$u$, and will inform its children $c_{m-p-1}(w), c_{m-p-2}(w), \dots, c_{0}(w)$ one round earlier, at rounds $p, p+1, \dots, m-1$, respectively. 
Now, following its list $\ell_{c_{m-p-1}(w)} = v_{m-1}, v_{m-2}, \dots, v_{m-p}, v_{m-p-1}, c_{m-p-2}(u), \dots, c_{0}(u)$, vertex~$c_{m-p-1}(w)$ will skip all informed vertices $v_{m-1},v_{m-2}, \dots, v_{m-p}$ (they were all informed from $u$), and will send the information to the root vertex~$v_{m-p-1}$ at round~$p+1$. Starting from round~$p+2$, vertex $w$ will inform all of its children in its binomial tree of dimension~$m-p-1$, and will complete broadcast by round~$m$. 

\begin{figure}[tb]
  \centering
  \includegraphics[width=.9\linewidth]{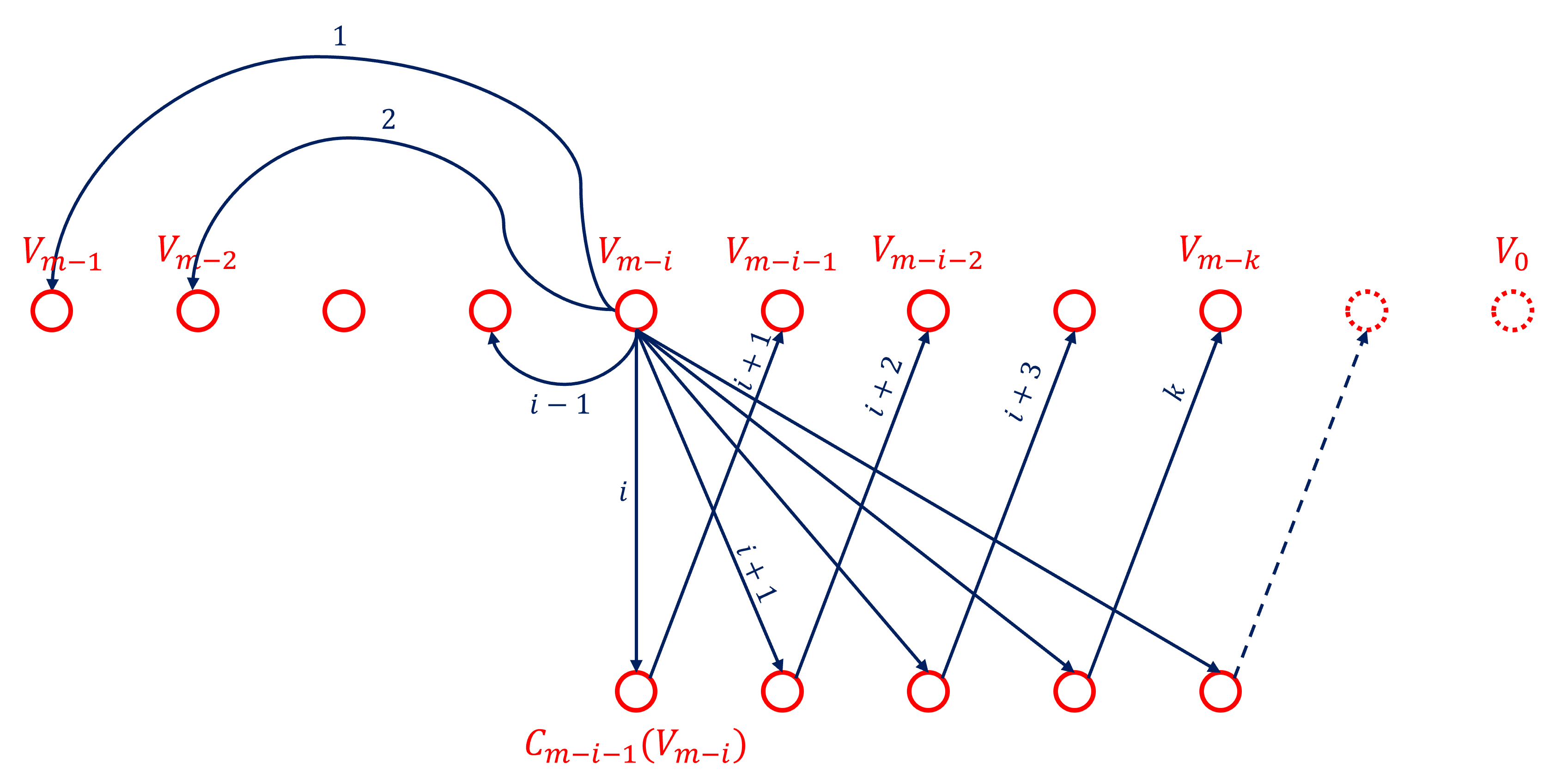} 
  \caption{Broadcast from source node $v_{m-i}$}
  \label{fig:treeMBG2}
\end{figure}

Similarly, each of $w$'s children, i.e., nodes $c_{m-p-t}(w)$ for $t=1,2,\dots,m-p-k$, will skip all informed vertices $v_{m-1}, v_{m-2}, \dots, v_{m-p}, \dots, v_{m-p-t}$, since they are informed earlier, either by $u$, or by the children $c_{m-p-1}(w), \dots, c_{m-p-t}$ of~$w$. Thus, $w$'s child $c_{m-p-t}(w)$ will send the information to  the root $v_{m-p-t}$ at round~$p+t$. Starting from round~$p+t+1$, vertex $w$ will then inform all of its children in its binomial tree of dimension $m-p-1$, and will complete broadcast by round~$m$. Thanks to Fact~\ref{basic-remark}, the Binomial tree $T_{m-p-t}$ will be fully informed by round~$m$ once its root vertex $v_{m-p-t}$ receives the message at round $p+t$ from $c_{m-p-t}(w)$ (see Figure~\ref{fig:treeMBG1}).

It remains to prove that all root vertices $v_{m-p-1}, \dots, v_{k}$ will complete broadcast within their respective binomial trees. This directly follows from fact~\ref{basic-remark}, as each root $v_{m-l}$, $p+1 \leq l \leq m-k$, receives the information from $c_{m-l}(w)$ at round~$l$, and completes broadcast in its Binomial tree $T_{m-l}$ during  the remaining $m-l$ rounds, for all $l=p+1, \dots, m-k$. 

Finally, if the source of broadcast is one of the root vertices $v_{m-1}, \dots, v_{k}$, then the process is even simpler. The details of broadcast from a root vertex source is displayed in Figure~\ref{fig:treeMBG2}. 
\end{proof}

%%%%%%%%%%%%%%%%%%%%%%%%%%%%%%%%%%%%%%%%%%%%%%%%%%%%%%
\section{Conclusion} 
%%%%%%%%%%%%%%%%%%%%%%%%%%%%%%%%%%%%%%%%%%%%%%%%%%%%%%

We have shown that, as far as the design of minimum broadcast graphs is concerned, the power of broadcast protocols is not limited by bounding them to be encoded by a single ordered list of neighbors at each node, which has the profitable feature of drastically reducing the space-complexity of the local encoding of the protocols. 

Our results hold under the assumption that every node can signal its neighbors to let them know that it has received the broadcast information. The cost of signaling the neighbors is negligible compared to the cost of transmitting a potentially long message, but getting rid of this assumption may be desirable, by focusing, e.g., on adaptive or even non-adaptive source-oblivious protocols. 
The analysis of adaptive and non-adaptive protocols however appears to be quite challenging. In fact, it is not even clear whether $n$-node broadcast graphs exists for all~$n$ under these constraints. Indeed, as already mentioned, the best known upper bound on the broadcast time of cliques is $\log_2n + O(\log\log n)$~\cite{KimC05}. On the other hand, a systematic study of the minimum broadcast graphs with small number of nodes show that optimal broadcast protocols for these graphs (i.e., protocols performing in $\lceil\log_2n\rceil$ rounds) can be implemented by lists in the adaptive source-oblivious model. So, it may actually be the case that the aforementioned signaling assumption can be removed, while essentially preserving the good properties of the protocols. This is however not clear, and we state that issue as an open problem. 

\paragraph{Open problem.}

Is there an infinite family of $n$-node graphs $(G_n)_{n\geq 1}$ such that, for every $n\geq 1$, the broadcast time of $G_n$ in the adaptive (or even non-adaptive) source-oblivious model is $\lceil\log_2n\rceil$? And, independently from whether the answer to the previous question is positive or not, what is the minimum number of edges of $n$-node graphs with optimal broadcast time in the adaptive (or non-adaptive) source-oblivious model?

\bibliographystyle{plain}

\begin{thebibliography}{10}

\bibitem{DiksP96}
Krzysztof Diks and Andrzej Pelc.
\newblock Broadcasting with universal lists.
\newblock {\em Networks}, 27(3):183--196, 1996.

\bibitem{FHMP79}
Arthur~M. Farley, Stephen~T. Hedetniemi, Sandra~Mitchell Mitchell, and Andrzej
  Proskurowski.
\newblock Minimum broadcast graphs.
\newblock {\em Discret. Math.}, 25(2):189--193, 1979.

\bibitem{FraigniaudL94}
Pierre Fraigniaud and Emmanuel Lazard.
\newblock Methods and problems of communication in usual networks.
\newblock {\em Discret. Appl. Math.}, 53(1-3):79--133, 1994.

\bibitem{GholamiH22a}
Mohammad~Saber Gholami and Hovhannes~A. Harutyunyan.
\newblock Broadcast graphs with nodes of limited memory.
\newblock In {\em 13th Int. Conference on Complex Networks (CompleNet)}, pages
  29--42, 2022.

\bibitem{GholamiH22b}
Mohammad~Saber Gholami and Hovhannes~A. Harutyunyan.
\newblock Fully-adaptive model for broadcasting with universal lists.
\newblock In {\em 24th International Symposium on Symbolic and Numeric
  Algorithms for Scientific Computing (SYNASC)}, pages 92--99, 2022.

\bibitem{GrigniP91}
Michelangelo Grigni and David Peleg.
\newblock Tight bounds on minimum broadcast networks.
\newblock {\em {SIAM} J. Discret. Math.}, 4(2):207--222, 1991.

\bibitem{HLMS11}
Hovhannes~A. Harutyunyan, Arthur~L. Liestman, Kazuhisa Makino, and Thomas~C.
  Shermer.
\newblock Nonadaptive broadcasting in trees.
\newblock {\em Networks}, 57(2):157--168, 2011.

\bibitem{HedetniemiHL88}
Sandra~Mitchell Hedetniemi, Stephen~T. Hedetniemi, and Arthur~L. Liestman.
\newblock A survey of gossiping and broadcasting in communication networks.
\newblock {\em Networks}, 18(4):319--349, 1988.

\bibitem{Hromkovic05}
Juraj Hromkovic, Ralf Klasing, Andrzej Pelc, Peter Ruzicka, and Walter Unger.
\newblock {\em Dissemination of Information in Communication Networks -
  Broadcasting, Gossiping, Leader Election, and Fault-Tolerance}.
\newblock Texts in Theoretical Computer Science. An {EATCS} Series. Springer,
  2005.

\bibitem{KimC05}
Jae{-}Hoon Kim and Kyung{-}Yong Chwa.
\newblock Optimal broadcasting with universal lists based on competitive
  analysis.
\newblock {\em Networks}, 45(4):224--231, 2005.

\bibitem{RosenthalS87}
Arnie Rosenthal and Peter Scheuermann.
\newblock Universal rankings for broadcasting in tree networks.
\newblock In {\em 25th Allerton Conference on Communication, Control and
  Computing}, pages 641--649, 1987.

\bibitem{SlaterCH81}
Peter~J. Slater, Ernest~J. Cockayne, and Stephen~T. Hedetniemi.
\newblock Information dissemination in trees.
\newblock {\em {SIAM} J. Comput.}, 10(4):692--701, 1981.

\end{thebibliography}

\end{document}